\newcommand{\cmnt}[1]{}
\newtheorem{prop}{Proposition}
\newcommand{\nlnn}{\protect\nonumber\\}
\newcommand{\matx}[1]{\protect\begin{bmatrix} #1 \end{bmatrix}}
\newcommand{\lm}{\lambda}
\newcommand{\bA}{\boldsymbol{A}}
\newcommand{\bB}{\boldsymbol{B}}
\newcommand{\bC}{\boldsymbol{C}}
\newcommand{\bD}{\boldsymbol{D}}
\newcommand{\bM}{\boldsymbol{M}}
\newcommand{\bz}{\boldsymbol{z}}
\newcommand{\AW}{\bA_{\textnormal{W}}}
\newcommand{\ALRW}{\bA_{\textnormal{LRW}}}
\newcommand{\AJPC}{\bA_{\textnormal{JPC}}}
\newcommand{\ALSJPC}{\bA_{\textnormal{LSJPC}}}
\newcommand{\bX}{\boldsymbol{X}}
\newcommand{\bY}{\boldsymbol{Y}}
\newcommand{\bZ}{\boldsymbol{Z}}
\newcommand{\bV}{\boldsymbol{V}}
\newcommand{\bU}{\boldsymbol{U}}
\newcommand{\bR}{\boldsymbol{R}}
\newcommand{\bS}{\boldsymbol{S}}
\newcommand{\bI}{\boldsymbol{I}}
\newcommand{\Id}[1]{\bI_{\!#1}}
\newcommand{\bO}{\boldsymbol{O}}
\newcommand{\Oh}{\mathcal{O}}
\newcommand{\bK}{\boldsymbol{K}}
\newcommand{\ZKL}{\bK_{\!\bZ}}
\newcommand{\ZKLL}{\bK_{\!\bZ,L}}
\newcommand{\bphi}{\boldsymbol{\phi}}
\newcommand{\mymathop}{\operatorname*}
\newcommand{\Ex}{\mymathop{E}}
\newcommand{\rank}{\mymathop{rank}}
\newcommand{\tr}{\mymathop{tr}}
\newcommand{\minm}[1]{\mymathop{minimize}_{#1}\ }
\newcommand{\sbjt}{\mymathop{subject\ to}\ }
\newcommand{\SmX}{\bC_{\!\bX}}
\newcommand{\SmY}{\bC_{\!\bY\!}}
\newcommand{\SmYh}{\bC_{\!\bY}^{1/2}}
\newcommand{\SmYhi}{\bC_{\!\bY}^{-1/2}}
\newcommand{\SmZ}{\bC_{\!\bZ}}
\newcommand{\SmXY}{\bC_{\!\bX\!\bY}}
\newcommand{\VL}{\bV_{\!L}}
\newcommand{\UL}{\bU_{\!L}}
\newcommand{\VXL}{\bV_{\!\bX\!,L}}
\newcommand{\VXLb}{\bV_{\!\bX\!,\overline{L}}}
\newcommand{\VYL}{\bV_{\!\bY\!,L}}
\newcommand{\VYLb}{\bV_{\!\bY\!,\overline{L}}}
\newcommand{\VZL}{\bV_{\!\bZ,L}}
\newcommand{\VZLb}{\bV_{\!\bZ,\overline{L}}}
\newcommand{\SL}{\bS_{\!L}}
\newcommand{\SZ}{\bS_{\!\bZ}}
\newcommand{\VX}{\bV_{\!\bX}}
\newcommand{\VY}{\bV_{\!\bY}}
\newcommand{\VZ}{\bV_{\!\bZ}}
\newcommand{\SZL}{\bS_{\!\bZ,L}}
\newcommand{\SZLb}{\bS_{\!\bZ,\overline{L}}}
\newcommand{\real}{\mathbb{R}}
\newcommand{\compl}{\mathbb{C}}
\newcommand{\field}{\compl}
\newcommand{\minMN}{{\min(M,N)}}
\newcommand{\RYL}{\bR_{\bY\!,L}}
\newcommand{\OmA}{\Omega_{\bA}}
\newcommand{\YB}{\bY_{\!\!\bB}}
\title{\LARGE \bf Well-Conditioned Linear Minimum 
Mean Square Error Estimation}
\author{Edwin K. P. Chong
\thanks{Edwin K. P. Chong is with Dept.\ of Electrical and Computer Engineering, and Dept.\ of Mathematics, Colorado State University, CO 80523. +1-970-491-7858. {\tt\small edwin.chong@colostate.edu}}
}%
\begin{document}
\sloppy 
	
\maketitle
\thispagestyle{empty}
\pagestyle{empty}

\begin{abstract}
Linear minimum mean square error (LMMSE) estimation is often ill-conditioned, suggesting that unconstrained minimization of the mean square error is an inadequate approach to filter design. To address this, we first develop a unifying framework for studying constrained LMMSE estimation problems. Using this framework, we explore an important structural property of constrained LMMSE filters involving a certain prefilter. Optimality is invariant under invertible linear transformations of the prefilter. This parameterizes all optimal filters by equivalence classes of prefilters. We then clarify that merely constraining the rank of the filter does not suitably address the problem of ill-conditioning. Instead, we adopt a constraint that explicitly requires solutions to be well-conditioned in a certain specific sense. We introduce two well-conditioned filters and show that they converge to the unconstrained LMMSE filter as their truncation-power loss goes to zero, at the same rate as the low-rank Wiener filter. We also show extensions to the case of weighted trace and determinant of the error covariance as objective functions. Finally, our quantitative results with historical VIX data demonstrate that our two well-conditioned filters have stable performance while the standard LMMSE filter deteriorates with increasing condition number.
\end{abstract}

\section{Introduction}

We investigate the problem of designing linear filters to minimize the mean square error under a constraint that the filter be well-conditioned. Without this constraint, the optimal filter depends on the \emph{condition number} of the input covariance matrix, often large. This makes the filter numerically unreliable to compute. Incorporating a suitable constraint on the minimization ensures a well-conditioned solution, i.e., the computed filter is reliable. (We use \emph{estimator} and \emph{filter} interchangeably.) 

We first show that, under an appropriate assumption on the constraint set, constrained optimal filters have a particular structure involving a certain \emph{prefilter}, and optimality of a filter is invariant under invertible linear transformations of the prefilter. So, all constrained optimal filters are parameterized by equivalence classes of prefilters. We then consider a well-studied constraint on the \emph{rank} of the filter. Based on the prefilter parameterization, a closed-form optimal solution is easy to derive, called the \emph{low-rank Wiener (LRW)} filter \cite{scharf91}. However, it turns out that constraining the rank still does not ameliorate the problem of ill-conditioning. A related rank-constrained estimator, the \emph{cross-spectral Wiener (CSW)} filter \cite{GoR97}, suffers from the same ill conditioning.

To properly address the ill-conditioning issue, we explicitly constrain the solution to be well-conditioned. No closed-form optimal solution is currently known to exist. Instead, we introduce two approximately optimal solutions: JPC and LSJPC. We analyze their asymptotic performance and show that, just like the LRW filter, they converge to the unconstrained LMMSE solution as the truncation-power loss goes to zero. To illustrate our results, we also show empirical comparisons based on real data: historical daily VIX values.

The practical consequences of ill-conditioning are well-known, but solutions for it in filtering applications are limited.
In \cite{ShR89} and \cite{KuK19}, 
the authors study the numerical behavior related to ill-conditioning of Kalman filters and their extensions. 
In \cite{GhD21}, 
the authors investigate the stability and sensitivity of the condition number of convolution neural network filters for image processing.
Though these studies are related to ill-conditioning, they do not include the design of well-conditioned LMMSE filters of the kind we seek. Because LMMSE filtering is a component of Kalman filters, our filter designs are relevant to Kalman filtering.

Our contributions are summarized as follows:

1. We develop a unifying framework for studying constrained LMMSE estimation problems (Section~\ref{sec:lmmse}) and show that
they all involve a prefiltering structure that is invariant under invertible linear transformations of the prefilter. This parameterizes all such filters by their equivalence classes of prefilters. 

2. We clarify that the rank-constrained optimal solution, the LRW filter (Section~\ref{subsec:LRW}), is generally ill-conditioned (Section~\ref{subsec:wcf}). The same holds for the CSW filter.

3. We introduce two new filters (Section~\ref{sec:filters}), JPC and LSJPC, and show that as their truncation-power loss goes to zero, they converge to the unconstrained LMMSE filter at the same rate as the LRW filter (Section~\ref{sec:diffs}). 

4. We show how to extend our formulation to the case of weighted trace and determinant of the error covariance as objective functions (Section~\ref{sec:wtdet}).

5. We use historical VIX data to demonstrate that the performance of JPC and LSJPC remains stable as we increase the size of the input covariance matrix, while the unconstrained LMMSE filter and LRW filter deteriorate significantly as expected (Section~\ref{sec:performance}). 

\section{LMMSE Estimation}
\label{sec:lmmse}

\subsection{The unconstrained case}

Let $\bX$ and $\bY$ be zero-mean random vectors taking values in $\field^N$ and $\field^M$ respectively.
We wish to estimate $\bX$ from $\bY$ using a linear estimator (or filter) $\bA$ ($N\times M$ matrix) such that the estimate $\bA\bY$ minimizes the mean square error $\Ex[\|\bA\bY - \bX\|^2]$, where $\Ex$ represents either expectation or empirical mean (from data), and $\|\cdot\|$ is the standard $2$-norm. 
In other words, $\bA$ solves the following optimization problem,
called the LMMSE estimation problem:
\begin{equation}
	\minm{\bA} \Ex[\|\bA\bY - \bX\|^2].
	\label{eq:unconstrLMMSE}
\end{equation}

\cmnt{
\begin{figure}
	\begin{center}
		\includegraphics[width=1.5in]{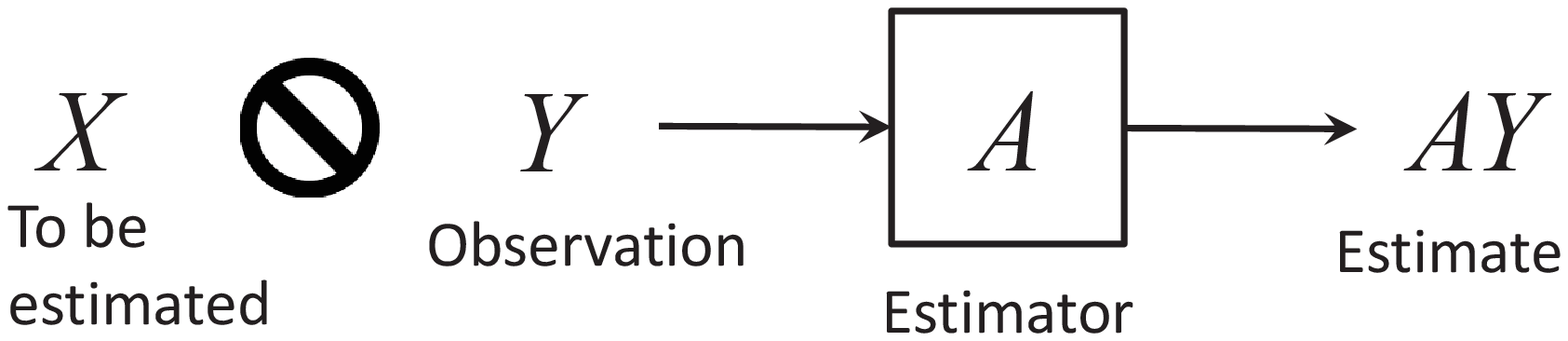}
	\end{center}
	\caption{Linear estimation}
	\label{fig:est}
\end{figure}
}

To express the optimal solution, define the covariance matrix of $\bX$, $\SmX:=\Ex[\bX\bX']$ ($N\times N$), covariance of $\bY$, $\SmY:=\Ex[\bY\bY']$ ($M\times M$), and crosscovariance of $\bX$ and $\bY$, $\SmXY:=\Ex[\bX\bY']$ ($N\times M$), where the superscript prime represents Hermitian transpose. Assume that $\SmY$
has full rank. 
The problem yields a simple closed-form unique solution 
	$\AW :=\SmXY\SmY^{-1}$,
henceforth called the \emph{unconstrained LMMSE filter} (or \emph{Wiener filter}, though the same name appears in other filters too).

\subsection{The constrained case}

Suppose that we now impose an explicit constraint on the estimator $\bA$ in terms of a constraint set $\OmA\subset\field^{N\times M}$. The \emph{constrained LMMSE} problem is then
\begin{align}
	\minm{\bA} & \Ex[\|\bA\bY - \bX\|^2] \label{eq:constrLMMSE}\\
	\sbjt & \bA \in \OmA. \nonumber
\end{align}
Our primary motivation is the difficulty in numerically computing unconstrained LMMSE estimates. Specifically, 
it involves the inverse $\SmY^{-1}$ (i.e., solving linear equations), which often precludes practically computing the solution, for two reasons. If $M$ is very large, then the computational burden might be practically infeasible. 
But more important, the large \emph{condition number} of $\SmY$---the ratio of its largest eigenvalue to its smallest eigenvalue---causes numerical problems.
In this case, we say that the computation is \emph{ill-conditioned}; otherwise, it is \emph{well-conditioned}. 
Ill-conditioning exists regardless of computational execution time or finite-precision arithmetic. 
The problem lies instead in the amplification of errors inherent in solving linear equations.

According to \cite{BeK80}, if the condition number is $10^c$ and the data have $d$ significant figures, then a small data perturbation can affect the solution in the $(d-c)$th place.
In practice, empirical covariance matrices often have significant errors in even the fourth significant figure. 
Moreover, condition numbers easily exceed $10^3$ (e.g., \cite{GhC20}; 
see also Section~\ref{sec:performance}).
In general, the mean condition number of a random $M\times M$ matrix grows with $M$ \cite{Sma85}, \cite{Ede88}. 

\subsection{Structure of optimal constrained estimators}

We need some further definitions. Let $L\leq M$ and $\bB\in\field^{L\times M}$ be full-rank. Suppose that a filter $\bA\in\field^{N\times M}$ factorizes as $\bA=\bD\bB$ for some $\bD\in\field^{N\times L}$. We call $\bB$ a \emph{prefilter} of $\bA$ because the input first goes through $\bB$, and then the prefiltered input goes through $\bD$. Clearly, premultiplying $\bB$ by an invertible matrix also produces a prefilter of $\bA$. The filter $\bA$ is said to be \emph{Wiener-structured} if it has the form $\bA = \SmXY\bB'(\bB\SmY\bB')^{-1}\bB$, i.e., the unique unconstrained LMMSE filter with the prefiltered input. 
Such filters are invariant to premultiplication of $\bB$ by any invertible matrix (this is easy to verify).
Finally, the set $\OmA$ is said to be \emph{Wiener-closed} (with respect to $\bB$) if $\SmXY\bB'(\bB\SmY\bB')^{-1}\bB\in\OmA$. Clearly, $\OmA$ remains Wiener-closed if we premultiply $\bB$ by an invertible matrix. 

For any full-rank $\bB\in\field^{L\times M}$ ($L\leq M$), the special case of $\OmA=\field^{N\times M}$ (unconstrained as in \eqref{eq:unconstrLMMSE}) is Wiener-closed. So are some other constraints of interest. We discuss a well-known example below in Section~\ref{subsec:LRW} 
and introduce another in Section~\ref{subsec:wcf} 
(with an additional requirement). 
Also, the optimal solution to \eqref{eq:unconstrLMMSE} is always Wiener-structured, provided $\bB$ is a prefilter.
In fact, given a Wiener-closed $\OmA$, a filter is optimal if and only if it is Wiener-structured, as shown below.

\begin{prop}\label{prop:optiffWiener}
Let $\OmA$ be Wiener-closed with respect to $\bB$ (full-rank) and let $\bA\in\OmA$ have prefilter $\bB$. Then $\bA$ is optimal for \eqref{eq:constrLMMSE} 
if and only if it is Wiener-structured.
\end{prop}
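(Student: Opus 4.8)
The plan is to reduce the problem, restricted to filters sharing the given prefilter $\bB$, to an unconstrained LMMSE problem in a reduced variable. Writing any such filter as $\bA=\bD\bB$ with $\bD\in\field^{N\times L}$ and setting $\bZ:=\bB\bY$, the objective $\Ex[\|\bD\bB\bY-\bX\|^2]$ becomes $\Ex[\|\bD\bZ-\bX\|^2]$. Since $\bB$ is full-rank ($L\leq M$) and $\SmY$ has full rank, the reduced covariance $\SmZ=\bB\SmY\bB'$ is invertible, so this is a bona fide unconstrained LMMSE problem in $\bD$ with input $\bZ$. By the closed form quoted for \eqref{eq:unconstrLMMSE}, it has the unique minimizer $\bD^{\star}=\SmXY\bB'(\bB\SmY\bB')^{-1}$, whence the unique MSE-minimizing filter among all those with prefilter $\bB$ is $\bA^{\star}=\bD^{\star}\bB=\SmXY\bB'(\bB\SmY\bB')^{-1}\bB$, i.e., exactly the Wiener-structured filter.

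Next I would record feasibility: because $\OmA$ is Wiener-closed with respect to $\bB$, the filter $\bA^{\star}$ lies in $\OmA$. Thus $\bA^{\star}$ is simultaneously feasible and the unique global minimizer of the MSE over the entire family $\{\bD\bB:\bD\in\field^{N\times L}\}$ of filters sharing the prefilter $\bB$. The two directions follow from this. For ``only if,'' let $\bA\in\OmA$ have prefilter $\bB$ and be optimal; since $\bA^{\star}\in\OmA$, optimality gives $\Ex[\|\bA\bY-\bX\|^2]\leq\Ex[\|\bA^{\star}\bY-\bX\|^2]$, while $\bA$ also has prefilter $\bB$, so the reverse inequality holds by minimality of $\bA^{\star}$ over this family; equality plus uniqueness forces $\bA=\bA^{\star}$, hence Wiener-structured. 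For ``if,'' a Wiener-structured $\bA$ with prefilter $\bB$ must coincide with $\bA^{\star}$ (the form is uniquely determined by $\bB$), and $\bA^{\star}$ attains the minimum over the prefilter family and is feasible, hence optimal.

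The step I expect to be delicate is pinning down the precise sense of ``optimal for \eqref{eq:constrLMMSE}'' in the ``if'' direction. The lower-bound argument compares $\bA$ only against competitors sharing the prefilter $\bB$, so to upgrade ``best among filters with prefilter $\bB$'' to ``optimal'' one must read the comparison within the equivalence class determined by $\bB$, consistent with the paper's parameterization of optimal filters by prefilter classes; for a general Wiener-closed $\OmA$ containing filters of other prefilters there is otherwise no reason for $\bA^{\star}$ to dominate them. I would therefore make explicit that feasibility via Wiener-closedness, together with uniqueness of the reduced LMMSE solution, is exactly what secures this conclusion, and I would verify that invariance of the Wiener-structured form under premultiplication of $\bB$ by an invertible matrix keeps the statement well-defined on equivalence classes. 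The remaining ingredients---the completion-of-squares identity yielding $\bD^{\star}$ and the strict convexity giving uniqueness---are routine and can be cited directly from the unconstrained case.
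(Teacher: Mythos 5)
Your proposal is correct and takes essentially the same route as the paper's own proof: both reduce the problem to the unconstrained LMMSE problem for the prefiltered input $\bB\bY$ (covariance $\bB\SmY\bB'$, crosscovariance $\SmXY\bB'$), invoke uniqueness of that solution to pin down $\SmXY\bB'(\bB\SmY\bB')^{-1}\bB$ as the unique best filter among those with prefilter $\bB$, and use Wiener-closedness only to guarantee its feasibility. The delicacy you flag in the ``if'' direction is genuine but is shared by the paper's proof, which likewise compares $\bA$ only against filters sharing the prefilter $\bB$; the intended reading of ``optimal'' is optimality within the class of feasible filters having that prefilter, and your reading is the right one, since under a global interpretation the ``if'' direction would fail---for the rank constraint \eqref{eq:OmA} with a poorly chosen $\bB$, the Wiener-structured filter is feasible and Wiener-structured yet generally has larger mean square error than $\ALRW$.
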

\begin{proof}
Consider the prefiltered input $\YB:=\bB\bY$. The covariance of $\YB$ is $\bB\SmY\bB'$, which is positive definite because both $\SmY$ and $\bB$ have full rank. The crosscovariance of $\bX$ and $\YB$ is $\SmXY\bB'$. Because $\OmA$ is Wiener-closed, it contains $\SmXY\bB'(\bB\SmY\bB')^{-1}\bB$, which is the unique unconstrained optimal filter with input $\YB$, i.e., has smaller mean square error than all other filters with prefilter $\bB$. Hence, $\bA$ is optimal if and only if $\bA=\SmXY\bB'(\bB\SmY\bB')^{-1}\bB$, which is Wiener-structured. 
\end{proof}

Proposition~\ref{prop:optiffWiener}, though elementary, exposes the intrinsic Wiener-structure (and hence invariance) of all optimal solutions with a Wiener-closed constraint set, thereby parameterizing them by equivalence classes of prefilters. The result also facilitates the design of candidate filters. To wit, once we select a prefilter $\bB$, we can use the filter $\bA = \SmXY\bB'(\bB\SmY\bB')^{-1}\bB$ knowing that any optimal filter with a Wiener-closed constraint has this structure. This form of filters features prominently in the rest of the paper. Later, in Section~\ref{subsec:JPC}, we also exploit the invariance of Wiener-structured filters to invertible linear transformations of $\bB$.

Assume that $\OmA$ is Wiener-closed with respect to each feasible prefilter $\bB$ (i.e., each $\bB$ that is a prefilter for some $\bA\in\OmA$). 
Based on Proposition~\ref{prop:optiffWiener}, \eqref{eq:constrLMMSE} can be posed in an equivalent form involving only Wiener-structured filters:
	\begin{align}
	\minm{\bB} & \Ex[\|\SmXY\bB'(\bB\SmY\bB')^{-1}\bB\bY - \bX\|^2] 
		\label{eq:constrB}  \\
		\sbjt & \bB \mathrm{\ full\ rank},
	\ \SmXY\bB'(\bB\SmY\bB')^{-1}\bB\in\OmA. 
		\nonumber
	\end{align}
This equivalent form of the problem is useful because the decision-variable matrix here is smaller than in \eqref{eq:constrLMMSE}, and the constraint set here is smaller.

\subsection{Rank-constrained filters}
\label{subsec:LRW}

An important special case of \eqref{eq:constrLMMSE} is the following: Given a positive integer $L\leq M$, let 
\begin{equation}
	\OmA = \{\bA\in\field^{N\times M}:\rank(\bA) \leq L\}.
	\label{eq:OmA}
\end{equation}
(If $L\geq \minMN$, then $\OmA = \field^{N\times M}$.)
Clearly, this $\OmA$ is Wiener-closed for any full-rank $\bB\in\field^{L\times M}$. 
Moreover, we can derive a closed-form expression for an optimal solution, in terms of the singular-value decomposition (SVD), using the following notation. Let $\SmYhi$ be the inverse square-root of $\SmY$. Write the SVD $\SmXY\SmYhi=\bU\bS\bV'$ with singular values $\lm_1,\ldots,\lm_\minMN$, listed in descending order by convention.
Let $\UL$, $\SL$, and $\VL$ be the submatrices of $\bU$, $\bS$, and $\bV$, respectively, consisting of the first $\min(L,N)$ columns.
The following \emph{rank-constrained} filter is optimal for \eqref{eq:constrLMMSE} with $\OmA$ in \eqref{eq:OmA}:
\begin{equation}
	\ALRW := \UL\SL\VL'\SmYhi.
	\label{eq:ALRW}
\end{equation}
This solution is derived in \cite{scharf91}, where it is called the \emph{low-rank Wiener (LRW)} filter. Reduced-rank filters have been studied quite extensively; see, e.g., \cite{GoR97}, \cite{HuN01}, 
\cite{Die07}, and \cite{ScC08}.

The LRW filter is Wiener-structured with prefilter $\bB=\VL'\SmYhi$, acting first on $\bY$ with $\SmYhi$, which \emph{whitens} $\bY$---the output $\SmYhi\bY$ has uncorrelated components with unit variance. The LRW filter then applies $\VL'$, which is equivalent to applying $\bV'$ but keeping only the first $\min(L,N)$ components. Because $\SmYhi\bY$ is white, any $\min(L,N)$ of them have the same total power. However, the power lost in applying $\UL\SL$ by keeping only $\min(L,N)$ singular values, an operation called \emph{truncation}, is minimized by picking the first $\min(L,N)$. This is the basic idea of \emph{principal component analysis}.


A closely related cousin of the LRW filter is the \emph{cross-spectral Wiener (CSW)} filter \cite{GoR97}. To define it, let $\bphi_i$ be the $i$th column of $\SmXY\SmYhi$. The quantity $\|\bphi_i\|^2$ is called the $i$th \emph{cross-spectral power} of $\bX$ and $\bY$ \cite{GoR97}. 
Next, suppose that we order the columns of $\bV$ not by the eigenvalues as before but by the cross-spectral powers instead. If we now redefine $\VL$ to be the first $\min(L,N)$ columns of $\bV$ according to this new ordering, then the expression in \eqref{eq:ALRW} gives the CSW filter. 
Because CSW is rank-constrained, and LRW is the unique optimal rank-constrained filter, generally CSW has worse mean square error in theory than LRW. More important, as shown next, neither LRW nor CSW is well-conditioned.

\subsection{Well-conditioned filters}
\label{subsec:wcf}

The LRW and CSW filters involve $\SmYhi$, the inverse of an $M\times M$ matrix $\SmYh$, which is (potentially) large and ill-conditioned. Indeed, the condition number of $\SmYh$ is $\lm_1^{1/2}/\lm_M^{1/2}=(\lm_1/\lm_M)^{1/2}$, the square root of the condition number of $\SmY$. Therefore, we should expect that computing LRW or CSW is numerically problematic. So it turns out that simply constraining the rank is insufficient to guarantee a well-conditioned solution (though to do so was not the original goal of the LRW and CSW filters).

To circumvent this issue, we define a new constraint as follows. We say that a matrix is \emph{$L$-well-conditioned} ($L$-WC) if it can be computed (from $\SmY$ and $\SmXY$) without any inverse larger than $L\times L$. Define $\OmA$ to be the set of all $L$-WC filters, which is Wiener-closed for any $L$-WC full-rank $\bB$. Indeed, given any such $\bB$, $\SmXY\bB'(\bB\SmY\bB')^{-1}\bB\in\OmA$ because the inverse in the expression $\SmXY\bB'(\bB\SmY\bB')^{-1}\bB$ is only $L\times L$. We call the problem with this new $\OmA$ the \emph{well-conditioned LMMSE} problem.

The LRW and CSW filters are no longer feasible here because they involve $\SmYhi$. Moreover,
a closed-form expression for the optimal $L$-WC solution is currently unknown to even exist. The best we can do here is to design \emph{approximately} optimal solutions and analyze their performance. Two such filters follow next.


\section{Well-Conditioned Filter Designs}
\label{sec:filters}

\subsection{JPC filter}
\label{subsec:JPC}

Let $\bZ$ be the vector with $\bX$ stacked above $\bY$. Let
$\SmZ$ be its $(M+N)\times (M+N)$ covariance,
which can be partitioned naturally as
\begin{equation}
	\SmZ = \matx{\SmX & \SmXY \\ \SmXY' & \SmY}.
	\label{eq:SmZ}
\end{equation}
Define the eigendecomposition
$\SmZ = \VZ\SZ\VZ'$
with eigenvalues ordered from largest to smallest as usual.
The matrices $\VZ$ and $\SZ$ (both $(M+N)\times(M+N))$) depend jointly on $\SmX$, $\SmY$, and $\SmXY$.
Next, partition $\VZ$ into $\VX$ (top $N$ rows) and $\VY$ (bottom $M$ rows).
Given $L\leq M$, write
\begin{align}
	\VX &= \matx{\VXL & \VXLb},\ \VY = \matx{\VYL & \VYLb}, \nlnn
	\VZ &= \matx{\VZL & \VZLb},\nlnn
	\SZ &= \matx{\SZL & \bO_{L\times(N+M-L)}  \\ \bO_{(N+M-L)\times L} & \SZLb }
\end{align}
where $\VXL$ is $N\times L$, $\VYL$ is $M\times L$, $\VZL$ is $(M+N)\times L$,
and $\SZL$ is $L\times L$.

Next, define the $L\times M$ \emph{resolution} matrix $\RYL := (\VYL'\VYL)^{-1}\VYL'$,
which produces the coordinates of the orthogonal projection onto the range of $\VYL$. 
Computing $\RYL$ involves the inverse of only $\VYL'\VYL$, which is $L\times L$ and so is well-conditioned by definition. We use $\RYL$ as the prefilter to define the following filter:
\begin{equation}
	\AJPC := \SmXY\RYL'(\RYL\SmY\RYL')^{-1}\RYL, \label{eq:AJPC1}
\end{equation}
which we call the \emph{joint-principal-component (JPC)} filter. JPC is well-conditioned---the matrix inverses are $L\times L$. 

The invertible matrix $(\VYL'\VYL)^{-1}$ premultiplying $\VYL$ in $\RYL$ can be eliminated, simplifying \eqref{eq:AJPC1} to
\begin{equation}
	\AJPC = \SmXY\VYL(\VYL'\SmY\VYL)^{-1}\VYL'.
	\label{eq:AJPC}
\end{equation}
So, though we started with $\RYL$ to explain the approach, we do not need it to implement the filter. We can just substitute the simpler $\VYL'$ for $\RYL$ as the prefilter even though the former is not an orthogonal-resolution matrix. 

The JPC filter is well-conditioned, involving only an $L\times L$ inverse.
Moreover, it is Wiener-structured with prefilter $\VYL'$.
A similar filter was considered recently in \cite{GhC20} but not formally derived there and was evaluated only empirically, suggesting promising performance for JPC, even when $\AW$ fails badly because of ill-conditioning. Our results in Sections~\ref{sec:diffs} and \ref{sec:performance} corroborate this suggestion.

\subsection{Least-squares JPC (LSJPC) filter}

We can simplify the JPC filter by the following observation. First define the Karhunen-Lo\`eve transform of $\bZ$ by $\ZKL := \VZ'\bZ$, so that $\bZ=\VZ\ZKL$. This means that $\bX=\VX\ZKL$ and $\bY=\VY\ZKL$. Next, let $\ZKLL$ be the top $L$-subvector of $\ZKL$, so that $\bX\approx\VXL\ZKLL$ and $\bY\approx\VYL\ZKLL$. We now estimate $\ZKLL$ from $\bY$. However, we cannot use the LMMSE filter 
for this task because it involves $\SmY^{-1}$. Instead, we use a \emph{least-squares} estimate based on $\bY\approx\VYL\ZKLL$, giving the formula $\ZKLL \approx (\VYL'\VYL)^{-1}\VYL'\bY=\RYL\bY$. Again, we recognize this to be the resolution of $\bY$ onto the range of $\VYL$. Using this estimate, we get the simple filter
\begin{equation}
	\ALSJPC := \VXL\RYL = \VXL(\VYL'\VYL)^{-1}\VYL',
\end{equation}
called the \emph{least-squares JPC (LSJPC)} filter. It is well-conditioned and is simpler than $\AJPC$---it involves fewer multiplications. LSJPC is not Wiener-structured. Interestingly, LSJPC involves inverting only $\VYL'\VYL$, suggesting better conditioning of $\ALSJPC$ relative to $\AJPC$. 

\section{Asymptotic Optimality}
\label{sec:diffs}

Define the \emph{truncation-power loss} $\rho_L$ 
as the power lost by truncation (made precise below), which is small by design.
We now show that LRW, JPC, and LSJPC all converge to the unconstrained LMMSE filter as $\rho_L\to 0$, with the same scaling law.
Hence, these filters are ``asymptotically just as good as'' $\AW$, even though
JPC and LSJPC are well-conditioned while LRW is not.

Our analysis uses the \emph{Bachmann-Landau} notation $\Oh(\cdot)$: Given a matrix $\bM(\rho)$ depending on a parameter $\rho\to 0$, $\bM(\rho)=\Oh(\rho)$ means that for some $c$ and all sufficiently small $\rho$, $\|\bM(\rho)\|\leq c\rho$, where $\|\cdot\|$ is 
some submultiplicative matrix norm \cite{ChZ13} (e.g., nuclear norm).
In this case, we say that $\bM(\rho)\to 0$ with a \emph{scaling law} of $\Oh(\rho)$.
Several algebraic rules help to simplify the calculations: If $\bC$ and $\bD$ are bounded (as $\rho\to 0$), then $\bC\Oh(\rho)=\Oh(\rho)$, $(\bC+\Oh(\rho))^{-1}=\bC^{-1}+\Oh(\rho)$, and $(\bC+\Oh(\rho))(\bD+\Oh(\rho))=\bC\bD+\Oh(\rho)$. In our analysis, it suffices to treat only the singular values (or eigenvalues) from $L+1$ onward
as vanishing.

\subsection{LRW filter}

For LRW, $\rho_L = \tr(\bS)-\tr(\SL)$. 
Using $\bU\bS\bV'=\UL\SL\VL'+\Oh(\rho_L)$ 
and the Bachmann-Landau rules, $\AW=\ALRW + \Oh(\rho_L)$, i.e.,
$\ALRW\to\AW$ with scaling law $\Oh(\rho_L)$. Here and below, a simple calculation shows that the mean square error also converges as $\Oh(\rho_L)$.

\subsection{JPC filter}
\label{subsec:JPCasymp}

For JPC and LSJPC, $\rho_L=\tr(\SZ)-\tr(\SZL)=\tr(\SZLb)$. 
Recall that unlike $\VZ$, the columns of $\VYL$ are not orthonormal, i.e., $\VYL'\VYL\neq\Id{L}$ ($L\times L$ identity) in general. So, we make the additional assumption that 
\begin{equation}
	\VYL'\VYL = \Id{L} + \Oh(\rho_L).
	\label{eq:RYLasymp}
\end{equation}
This assumption is reasonable and natural. 
It holds whenever $N$ scales sublinearly with 
$M$ such that $\VXL = \Oh(\rho_L)$, which implies \eqref{eq:RYLasymp}.

Using \eqref{eq:RYLasymp} and the Bachmann-Landau rules, we have
\begin{align}
\SmY^{-1} &= \VY\SZ^{-1}\VY' \nlnn
	&=\VYL\SZL^{-1}\VYL'+\Oh(\rho_L) \\
\SZL &= \VYL'\VYL\SZL\VYL\VYL'+\Oh(\rho_L) \nlnn
	&= \VYL'\SmY\VYL + \Oh(\rho_L). \label{eq:VYLSmYVYL}
\end{align}
Using these and the Bachmann-Landau rules again,
\begin{align}
	\AW
	&= \SmXY\SmY^{-1} \nlnn
	&= \SmXY\VYL\SZL^{-1}\VYL'+\Oh(\rho_L) \nlnn
	&= \SmXY\VYL(\VYL'\SmY\VYL)^{-1}\VYL'+\Oh(\rho_L) \nlnn
	&= \AJPC + \Oh(\rho_L).
\end{align}
So, $\AJPC\to\AW$ with scaling law $\Oh(\rho_L)$, just like $\ALRW$.

\subsection{LSJPC filter}
\label{subsec:LSJPCasymp}

Again using \eqref{eq:RYLasymp} and the Bachmann-Landau rules,
\begin{equation}
	\SmXY = \VX\SZ\VY'=\VXL\SZL\VYL'+\Oh(\rho_L)
\end{equation}
and so
\begin{align}
	\AW
	&= \VXL\SZL\VYL'\VYL\SZL^{-1}\VYL'+\Oh(\rho_L) \nlnn
	&= \VXL\SZL(\Id{L}+\Oh(\rho_L))\SZL^{-1}\VYL'+\Oh(\rho_L) \nlnn
	&= \VXL\VYL'+\Oh(\rho_L) \nlnn
	&= \ALSJPC+\Oh(\rho_L).
\end{align}
So, $\ALSJPC\to\AW$ with scaling law $\Oh(\rho_L)$, just like $\ALRW$ and $\AJPC$.


The analysis above suggests further simplifications to JPC and LSJPC. For example, approximating $\SmY^{-1}$ by $\VYL\SZL^{-1}\VYL$, $\AJPC\approx\SmXY\VYL\SZL^{-1}\VYL'$. Also, approximating $\VYL'\VYL$ by $\Id{L}$, $\ALSJPC\approx\VXL\VYL'$ (no matrix inverse); it could not get any simpler. 

\section{Weighted Trace and Determinant}
\label{sec:wtdet}

\newcommand{\bG}{\boldsymbol{G}}
\newcommand{\MSE}{\operatorname{MSE}}
\newcommand{\JW}{J_{\text{wt}}}
\newcommand{\Jdet}{J_{\det}}
\newcommand{\SmE}{\bC_\text{err}}
\newcommand{\SmXh}{\bC_{\!\bX}^{1/2}}
\newcommand{\SmXhi}{\bC_{\!\bX}^{-1/2}}
\newcommand{\AWLRW}{\bA_{\textnormal{WLRW}}}

Our objective function so far, the mean square error, can also be expressed as the trace of the \emph{error-covariance} matrix $\SmE := \Ex[(\bA\bY-\bX)(\bA\bY-\bX)']$ as $\Ex[\|\bA\bX-\bY\|^2] = \tr(\SmE)$. An immediate generalization of this objective function is the \emph{weighted trace}, $\JW(\bA) := \tr((\bG'\bG)\SmE)$, where $\bG$ is invertible, leading to a problem like \eqref{eq:constrLMMSE} but with objective function $\JW$ \cite{HuN01}.
Clearly, the regular mean square error is a special case of $\JW$
with $\bG=\Id{N}$. In fact, as shown below, the weighted-trace case is \emph{equivalent}---its solution can be obtained from the regular case. 



Rewriting $\JW(\bA) = \Ex[\|(\bG\bA)\bY-(\bG\bX)\|^2])$, the new objective
is simply the previous objective with covariance of $\bY$, $\SmY$, and 
crosscovariance of $\bG\bX$ and $\bY$, $\bG\SmXY$, except that the decision variable $\bA$ is premultiplied by $\bG$. 
Therefore, the optimal solution can be obtained as a special case of the regular (unweighted) constrained LMMSE problem. Indeed, the rank-constrained optimal filter for the weighted-trace case is easy to write down based on \eqref{eq:ALRW}. Similarly,
we can apply JPC and LSJPC to the weighted-trace case with the same asymptotic analyses.


Another objective function of interest is the \emph{determinant} of the error covariance:
$\Jdet(\bA) := \det(\SmE)$.
The associated rank-constrained optimal filter is also studied in \cite{HuN01}, where it is shown that minimizing $\Jdet$ is equivalent to minimizing $\JW$ with $\bG=\SmXhi$. So the minimizer of $\Jdet$ can be obtained as a special case of the minimizer of $\JW$, and hence also of the unweighted mean square error. Accordingly, JPC and LSJPC can also be applied to the determinant case as a special case of the weighted-trace modification described above. 

\section{Quantitative Performance with Real Data}
\label{sec:performance}

\subsection{Overview of VIX data}

To illustrate the performance of JPC and LSJPC relative to LMMSE, we provide empirical results using real data. We also show that LRW suffers from the same ill-conditioning as LMMSE. We do not consider CSW here as its ill-conditioned behavior is already reflected in LMMSE and LRW. 

Our empirical data consists of historical \emph{Cboe Volatility Index (VIX)} daily closing values \cite{SaM19}. VIX is a quantitative indicator of the equity market's expectation for the strength of future changes of the S\&P~500 index (in the United States). The historical VIX data is freely available and suits our purposes because of its abundance. We obtained our data from \cite{vix}. The VIX sequence has been shown to be empirically wide-sense stationary \cite{SaM19}.

For our estimation problem, we take $\bX$ to be the finite sequence of VIX daily closing values over $N$ consecutive days and $\bY$ to be the sequence of VIX values over the immediate prior $M$ consecutive days. So, our estimation problem is to \emph{predict} $N$ consecutive VIX values from the most recent $M$ prior values, with time measured in days. To estimate covariances, we use samples consisting of vectors of VIX values for $M+N$ consecutive days (corresponding to samples of $\bZ$). These vector-valued samples start at every available day from the earliest date until the date before the $M+N$ days ending with the most recent date available. Owing to stationarity, we treat these samples to be drawn from a common $(M+N)$-variate distribution and with constant mean. The samples are correlated because of the relatively long-range correlation of VIX data.

\subsection{Data processing}

Our VIX dataset corresponds to $7923$ consecutive trading days, starting on 2-January-1990 and ending on 17-June-2021. In our experiments, we vary $M$ from $400$ to $3200$, and we fix $N = 7$. Therefore, the number of data samples is $7923-(M+7)$. We reserve 20\% of the samples for test and evaluation, while the other 80\% are for computing the covariance matrix of $\bZ$ (training); i.e., we conduct \emph{out-of-sample} test experiments. The test-and-evaluation data vectors are sampled uniformly from the available data samples.

For example, for $M=2000$, there are $5917$ vector samples; $4733$ samples are for covariance estimation and $1184$ are for the prediction experiments. Despite the correlation of the samples, these large numbers allow for relatively accurate estimation of the covariance and quantitative performance. We subtracted the empirical average value (approximately $20$) from the VIX values before computing predictions.

Because of the abundance of data, it suffices to compute the empirical covariance matrix of $\bZ$ using the standard estimation formula. To elaborate, let $\bz_1,\bz_2,\ldots,\bz_K\in\real^{M+N}$ be the data-vector samples (with average subtracted). Then, $\SmZ$ is computed using 
	$(\sum_{i=1}^K \bz_i\bz_i')/(K-1)$.
We extract $\SmY$ and $\SmXY$ from $\SmZ$ as submatrices (see \eqref{eq:SmZ}).

\cmnt{
\subsection{Covariance}

Figure~\ref{fig:autocorrM2800} shows empirical covariance values to illustrate the stability of the covariance over time and the long-range correlation. The covariance values come from three rows of $\SmZ$ for $M=2800$: first, middle, and last rows. The vertical axis in Figure~\ref{fig:autocorrM2800} represents $\SmZ(i,j)$, the entry of $\SmZ$ at location $(i,j)$. The middle row (labeled ``halfway'' in Figure~\ref{fig:autocorrM2800}) is either at $i=(M+N)/2$ or $i=(M+N+1)/2$ depending on which value is an integer, and the last row is either at $i=M+N-1$ or $i=M+N$ depending on which is odd. Note that the horizontal axis is $j-i$ so that the values can be placed on a common axis. We can treat $j-i$ as the ``lag'' variable of the windowed empirical autocovariance function of the VIX data. 

From Figure~\ref{fig:autocorrM2800}, we can see that the covariance remains stable over the duration of the data, for a window of duration $M+N=2807$ starting at 2-January-1990 all the way to a window starting at 23-April-2010 and ending at the most recent date available, 17-June-2021. Moreover, the covariance even at a lag of $2806$ is significant, indicating the long-range correlation of the VIX data (though this alone does not suggest how large $M$ should be for the purpose of prediction).

\begin{figure}
	\begin{center}
		\includegraphics[width=1.4in]{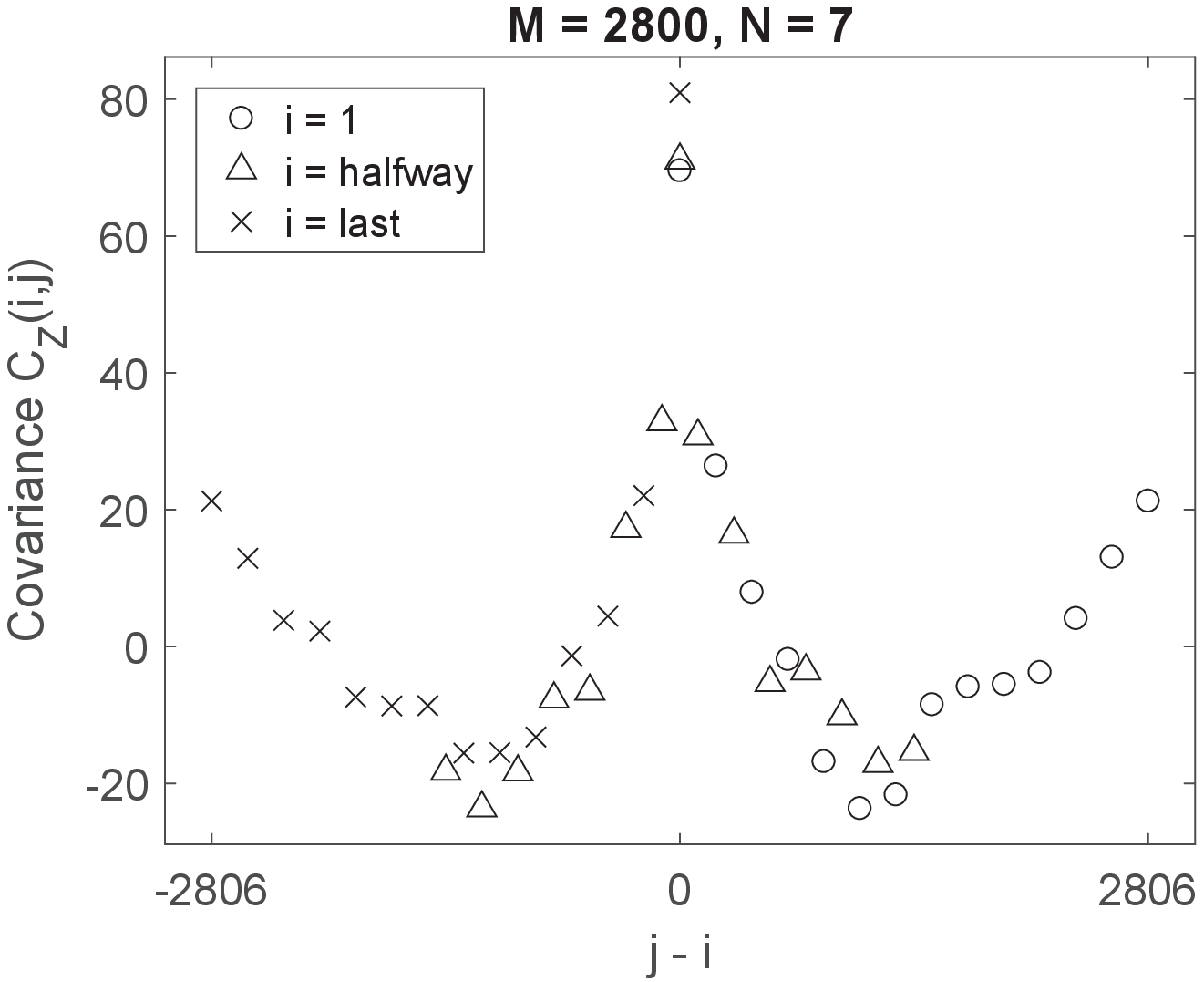}
	\end{center}
	\caption{Three rows of covariance matrix $\SmZ$ for $M=2800$ showing stability of the covariance over time and the long-range correlation.}
	\label{fig:autocorrM2800}
\end{figure}

Figure~\ref{fig:eigCYatM2000N7} shows the eigenvalues of $\SmY$ in descending order for $M=2000$. (Recall that $\SmY^{-1}$ is involved in $\AW$.) The eigenvalues span over six orders of magnitude. Moreover, they decrease very quickly at the beginning, about three orders of magnitude within the first 5\% of $M=2000$. Because the plot in Figure~\ref{fig:eigCYatM2000N7} is bounded strictly below a straight line (corresponding to exponential decay), the decay here is at least exponential.

\begin{figure}
	\begin{center}
		\includegraphics[width=1.4in]{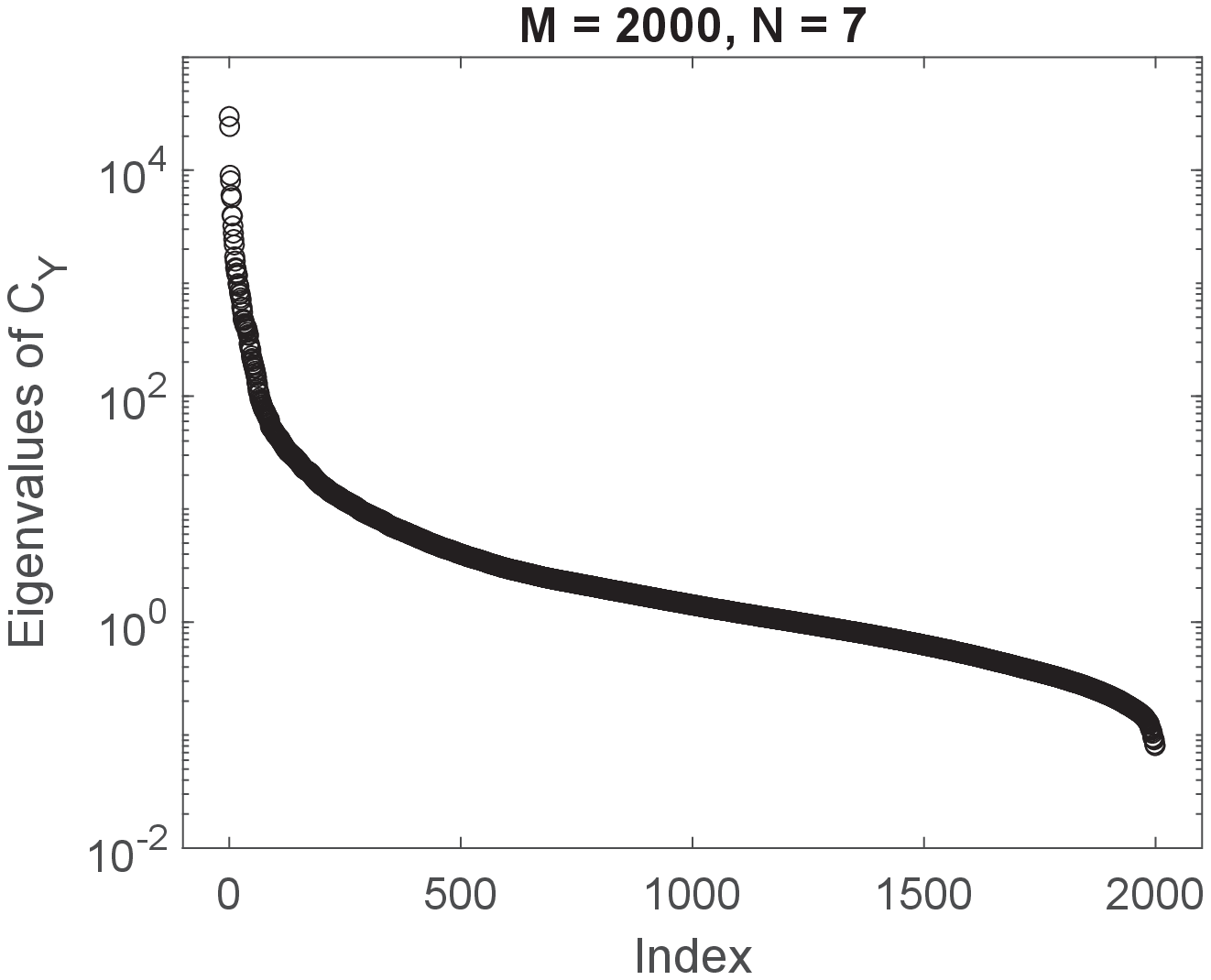}
	\end{center}
	\caption{Eigenvalues of $\SmY$ for $M=2000$.}
	\label{fig:eigCYatM2000N7}
\end{figure}

Figure~\ref{fig:condCYvsM} shows how the condition number of $\SmY$ varies as $M$ increases. As we can see, the condition number increases by over two orders of magnitude from $M=400$ to $M=3200$. As shown later, above about $M=1600$, computing the inverse of $\SmY$ is unreliable, which corresponds to a condition number of roughly $2\times 10^5$.

\begin{figure}
	\begin{center}
		\includegraphics[width=1.4in]{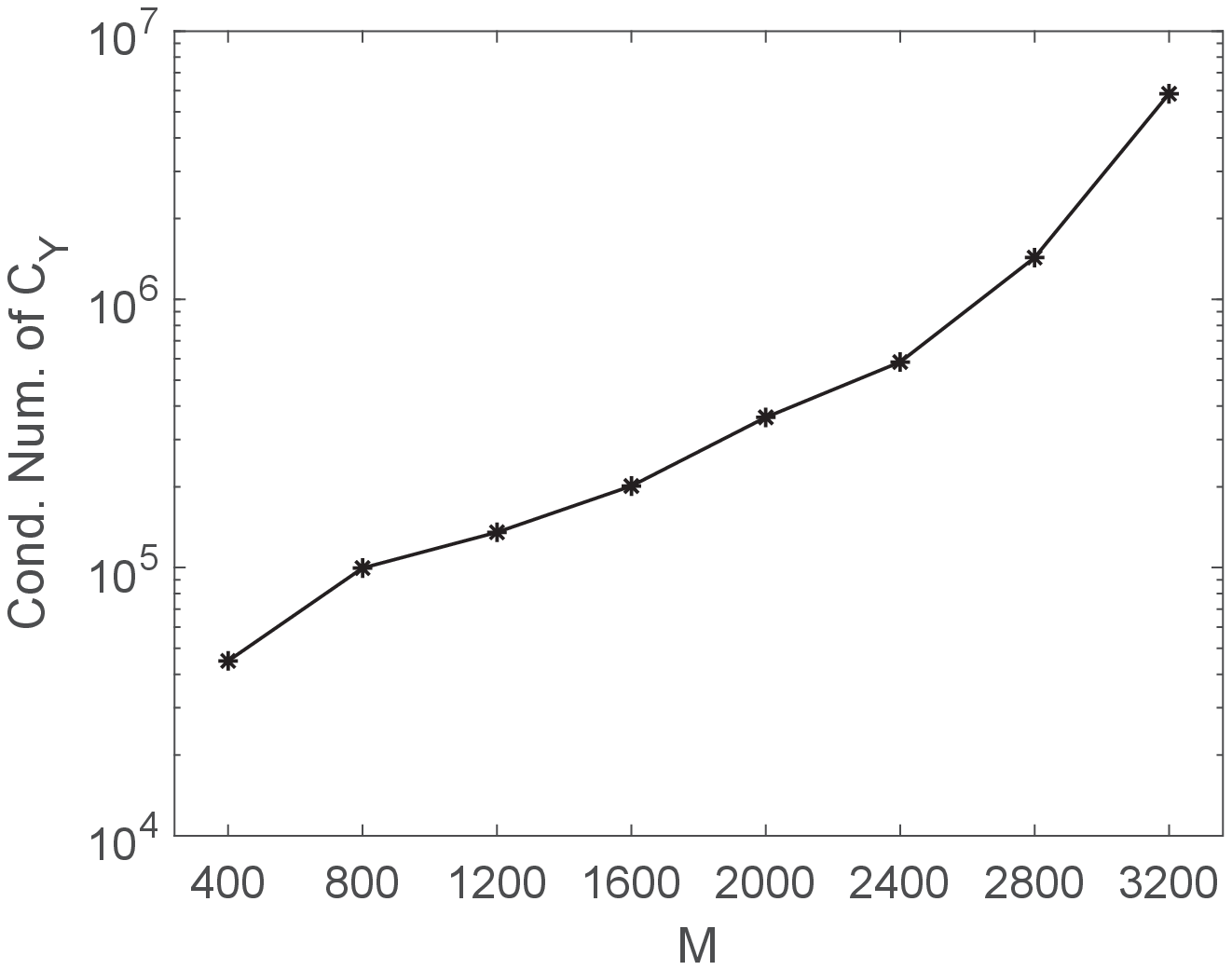}
	\end{center}
	\caption{Condition number of $\SmY$ as $M$ varies.}
	\label{fig:condCYvsM}
\end{figure}

Figure~\ref{fig:condVLCYVL2000} shows the condition numbers of the matrices involved in the inverses in JPC and LSJPC for $M=2000$, namely $\VYL'\SmY\VYL$ for JPC and $\VYL'\VYL$ for LSJPC. For example, they vary by approximately five orders of magnitude at $L=1900$. The condition numbers for LSJPC are smaller than for JPC, likely because $\SmY$ is not involved in $\ALSJPC$. This behavior is relatively consistent as we vary $M$. We did not plot values at $L=2000$ because of the difficulty in reliably computing the extremely large condition number of $\VYL'\SmY\VYL$. Note that $\VYL$ is not an orthogonal matrix even when $L=M$, so $\VYL'\SmY\VYL$ is not a similarity transformation of $\SmY$---their eigenvalues are different in general.

\begin{figure}
	\begin{center}
		\includegraphics[width=1.4in]{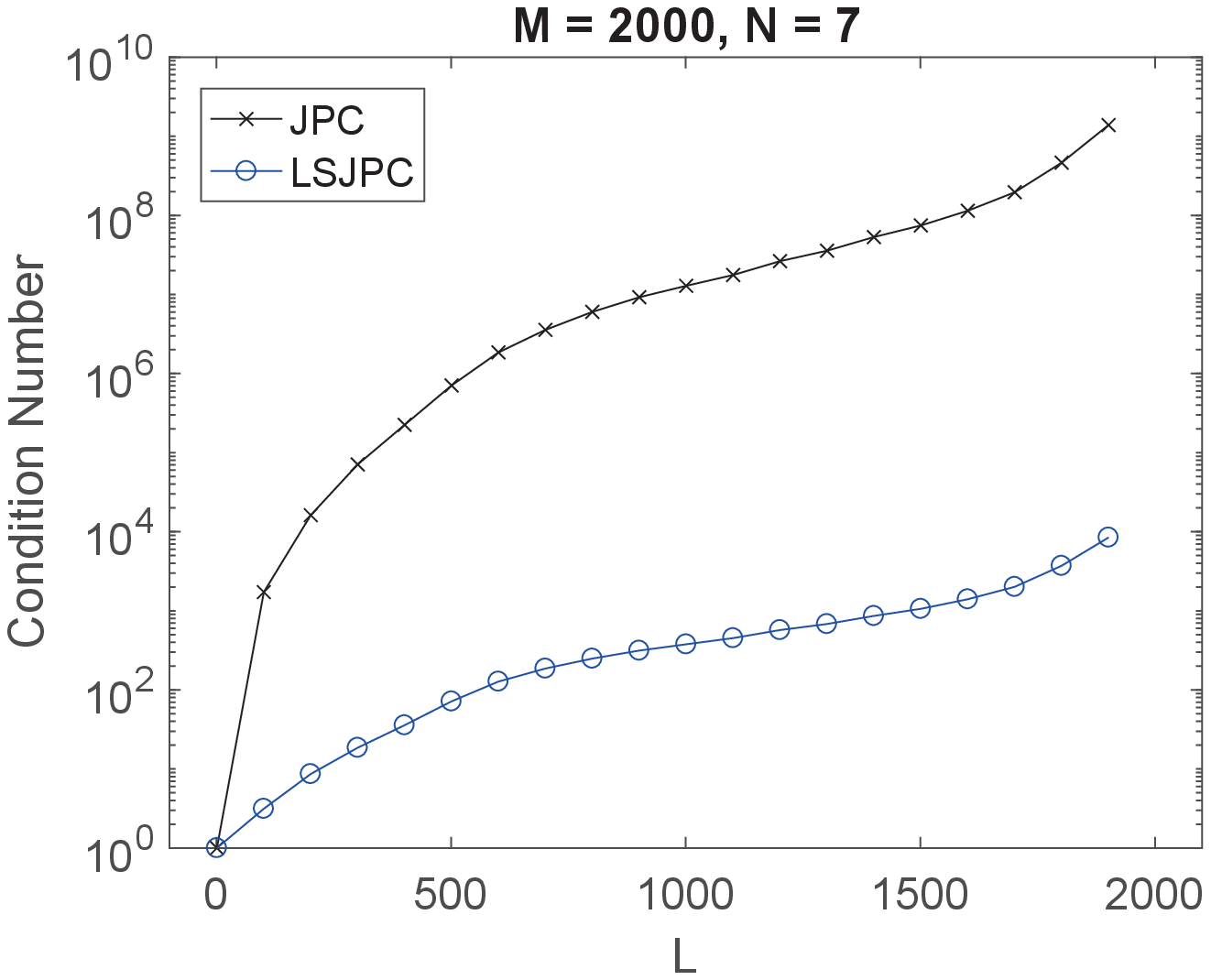}
	\end{center}
	\caption{Condition numbers involved in JPC and LSJPC as $L$ varies.}
	\label{fig:condVLCYVL2000}
\end{figure}
}

\subsection{Performance}

For the remainder of the evaluation, we need to define the \emph{normalized root-mean-square (RMS) error} as a performance metric. The RMS error, a standard performance metric for prediction, is simply the square root of the mean square error. Empirically, we first compute the squared Euclidean norm of the difference between the true and predicted $\bX$ vectors (of length $N=7$). Then we average the squared norm of the errors over all the data samples reserved for test and evaluation (described earlier). None of these samples were used in empirically computing $\SmZ$. We then take the square root of this average to get the RMS error value. The \emph{normalized} RMS error is then calculated by dividing the RMS error value by the RMS value of the $N$-vectors being estimated (including the nonzero mean). This RMS value is empirically computed by taking the squared norm of each sample of $\bX$ plus its empirical mean, averaging these values, and then taking the square root of the average. Using the earlier notation, the normalized RMS error of a filter $\bA$ (LMMSE, LRW, JPC, or LSJPC) is 
\[
	\sqrt{\frac{1}{K}\sum_{i=1}^K \|\bA\bz_i(\bY)-\bz_i(\bX)\|^2}\Bigg/\sqrt{\frac{1}{K}\sum_{i=1}^K \|\bz_i(\bX)+\bar{\bz}\|^2},
\]
where $\bz_i(\bY)$ is the $M$-subvector of $\bz_i$ corresponding to $\bY$, $\bz_i(\bX)$ is the $N$-subvector corresponding to $\bX$, and $\bar{\bz}$ is the average $N$-vector. (Of course, we can dispense with the factor $1/K$.) As intended, the normalization provides performance-metric values that are directly comparable as we vary the parameters in our experiment. Good performance values are significantly smaller than $1$.

Figure~\ref{fig:1}a shows the condition number of $\SmY$ as $M$ increases. As we can see, the condition number increases by over two orders of magnitude as $M$ varies from $M=400$ to $M=3200$. As shown later, above about $M=1600$, computing the inverse of $\SmY$ is unreliable, which corresponds to a condition number of roughly $2\times 10^5$.

\begin{figure}
	\begin{center}
		\includegraphics[width=1.65in]{Figures/condCYvsM}
		\includegraphics[width=1.65in]{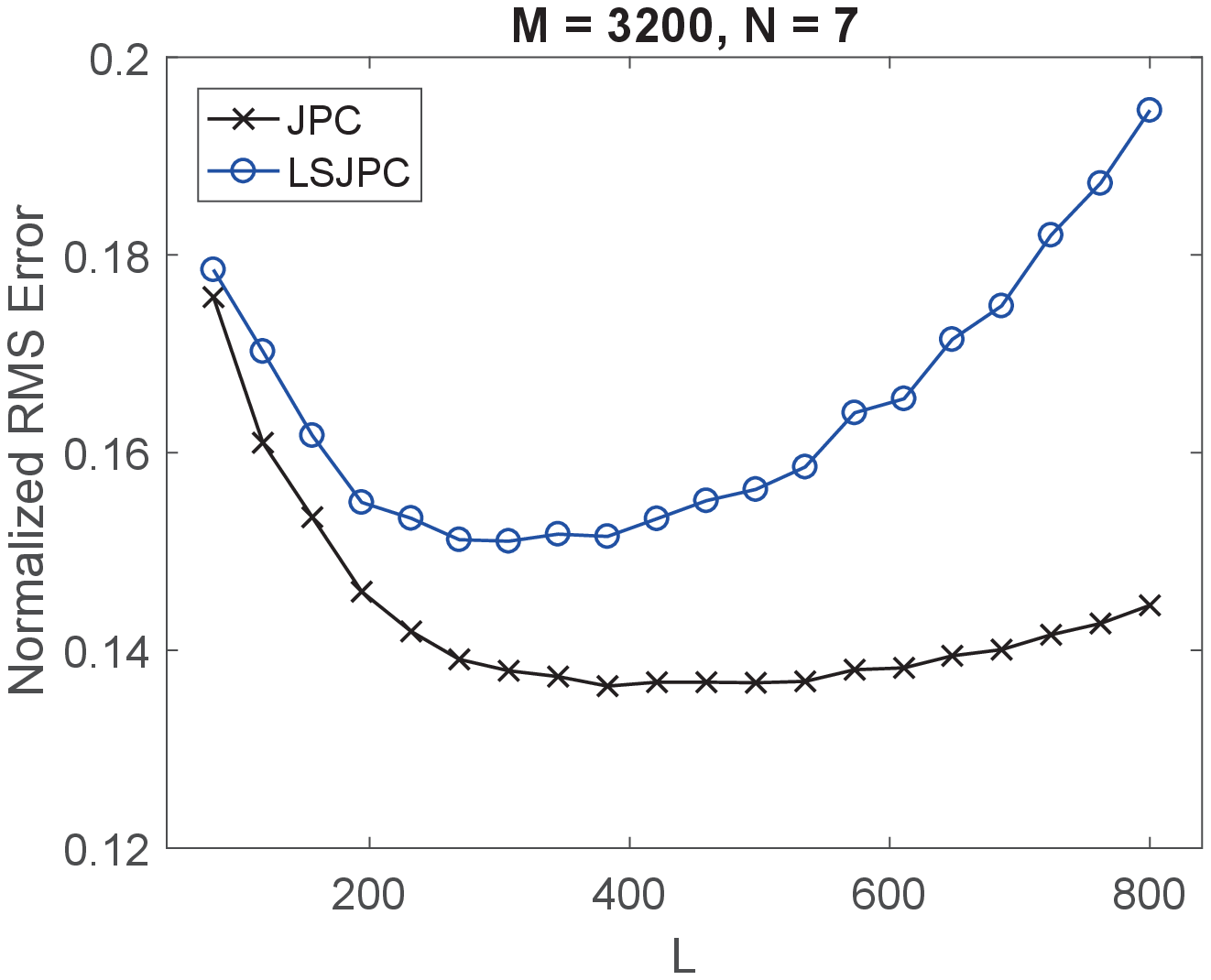}
		\\
		{\small \ \hfill a. \hfill\ \hfill b. \hfill\ }
	\end{center}
		\vspace*{-4mm}
	\caption{a. Condition number of $\SmY$ vs.\ $M$. b. Normalized RMS error vs.\ $L$ for $M=3200$ and $N=7$.}
	\label{fig:1}
		\vspace*{-3mm}
\end{figure}


Figure~\ref{fig:1}b shows the normalized RMS error as a function of $L$ for $M=3200$. As we increase $L$, the truncation-power loss for each filter decreases, and the filter becomes more like $\AW$. At the same time, the ill-conditioning increases, and the filter increasingly exhibits numerical unreliability. 
Thus, there is an optimal intermediate value of $L$ (see Figure~1b).

Practically, a suitable value of $L$ can be found using a simple line-search procedure \cite{ChZ13} 
together with the formula $\tr(\SmX-2\SmXY\bA'+\bA\SmY\bA')$ 
for the mean square error of any filter $\bA$.
Fortunately, with respect to the normalized RMS error, the performance is relatively insensitive close to the minimizer. For JPC in Figure~\ref{fig:1}b, we can vary $L$ by even $200$ without significantly changing the normalized RMS error. This approximately corresponds to a 50\% variation in $L$, which is a very wide margin. LSJPC is slightly more sensitive---we can vary $L$ by about $100$ without significantly affecting the performance, still a very wide margin. This means that the performance is relatively robust to our choice of $L$ (within certain generous bounds), a desirable feature of JPC and LSJPC.
The insensitivity increases as $M$ decreases.
Unsurprisingly, JPC slightly outperforms LSJPC (by less than 10\% at their optimal points), likely because of the additional approximations involved in LSJPC. But recall that LSJPC involves fewer computations 
than JPC, so this tradeoff is favorable in many cases.

\begin{figure}
	\begin{center}
		\includegraphics[width=1.65in]{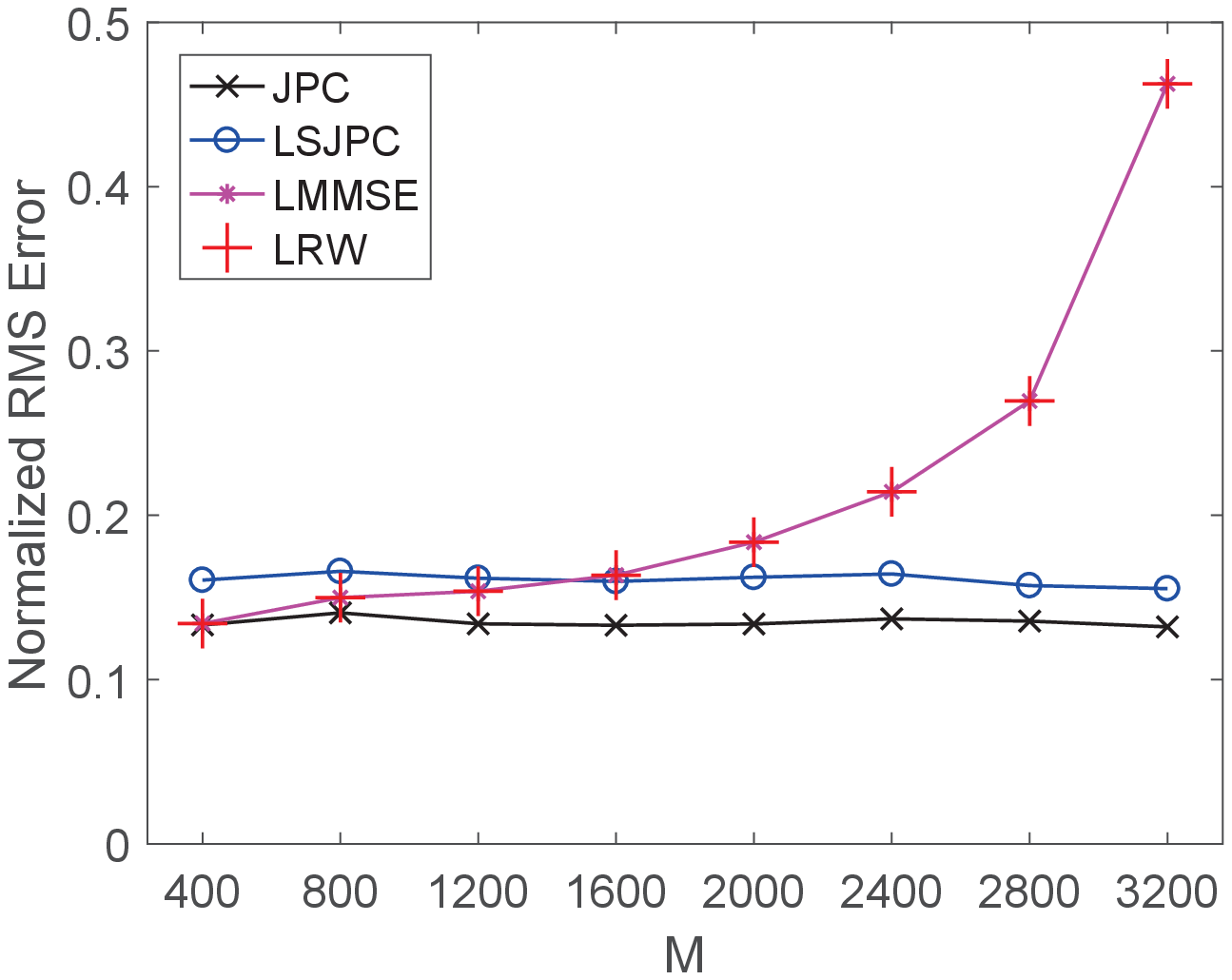}
		\includegraphics[width=1.65in]{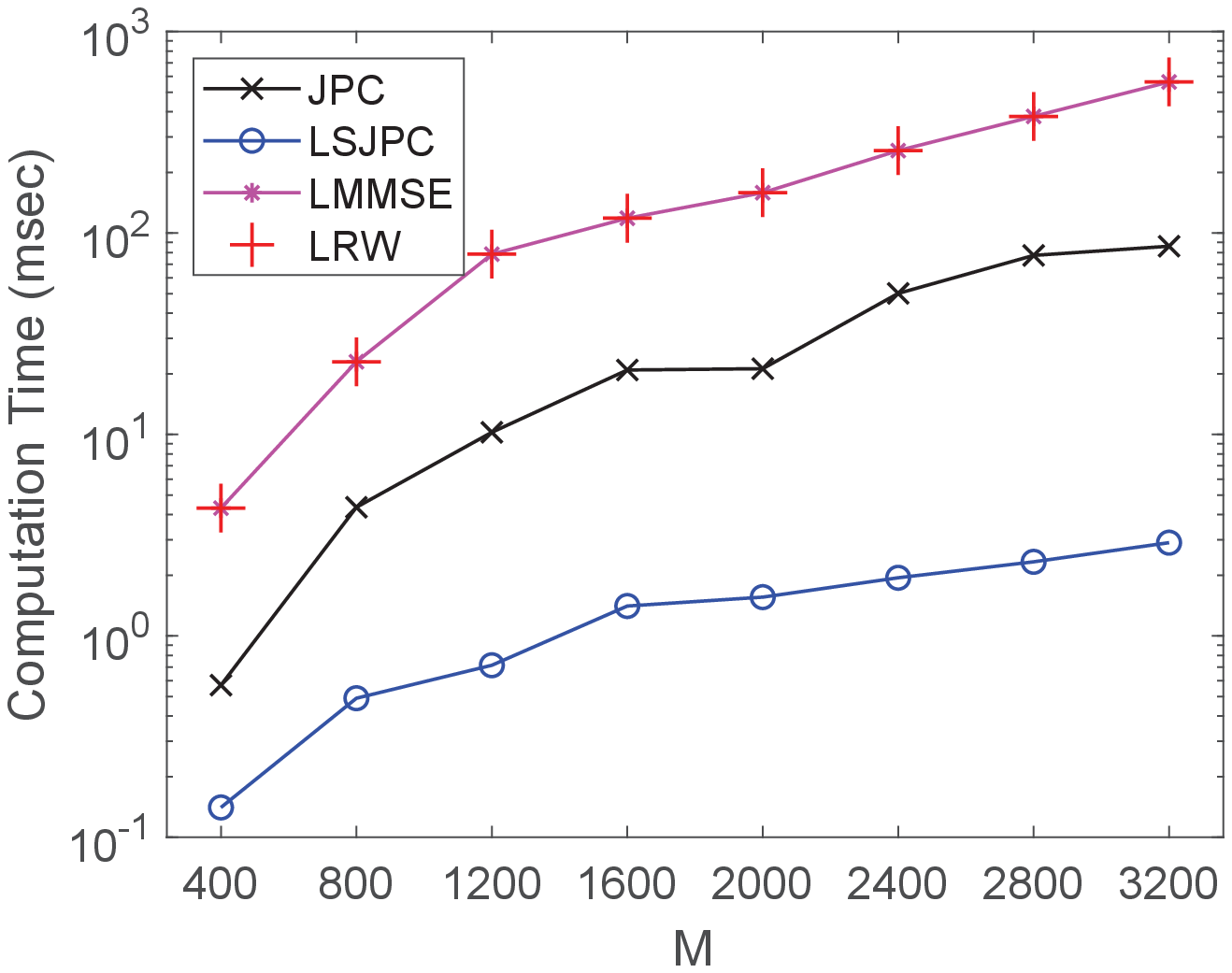}
		\\
		{\small \ \hfill a. \hfill\ \hfill b. \hfill\ }
		\vspace*{-4mm}
	\end{center}
	\caption{
	a. Normalized RMS error vs.\ $M$.  b. Computation time vs.\ $M$.}
	\label{fig:2}
		\vspace*{-3mm}
\end{figure}

\cmnt{
Figure~\ref{fig:argminLvsM} shows plots of the best $L$ values for JPC and LSJPC as we vary $M$. The two dashed straight lines are the lines of best fit through the empirically computed optimal values of $L$, which are naturally somewhat noisy because of the insensitivity mentioned above and also because of noise in the data. Nonetheless, the trend is clearly linear within the insensitivity bounds. The condition numbers associated with these best values of $L$ are relatively small. For example, looking back at Figure~\ref{fig:condVLCYVL2000} (where $M=2000$), these condition numbers are on the order of $10^5$ for JPC and less than $10$ for LSJPC. From Figure~\ref{fig:condCYvsM}, we can see that for LMMSE, the associated condition number (of $\SmY$) is also on the order of $10^5$. However, as shown next, JPC and LSJPC outperform LMMSE at $M=2000$ and higher.

\begin{figure}
	\begin{center}
		\includegraphics[width=1.4in]{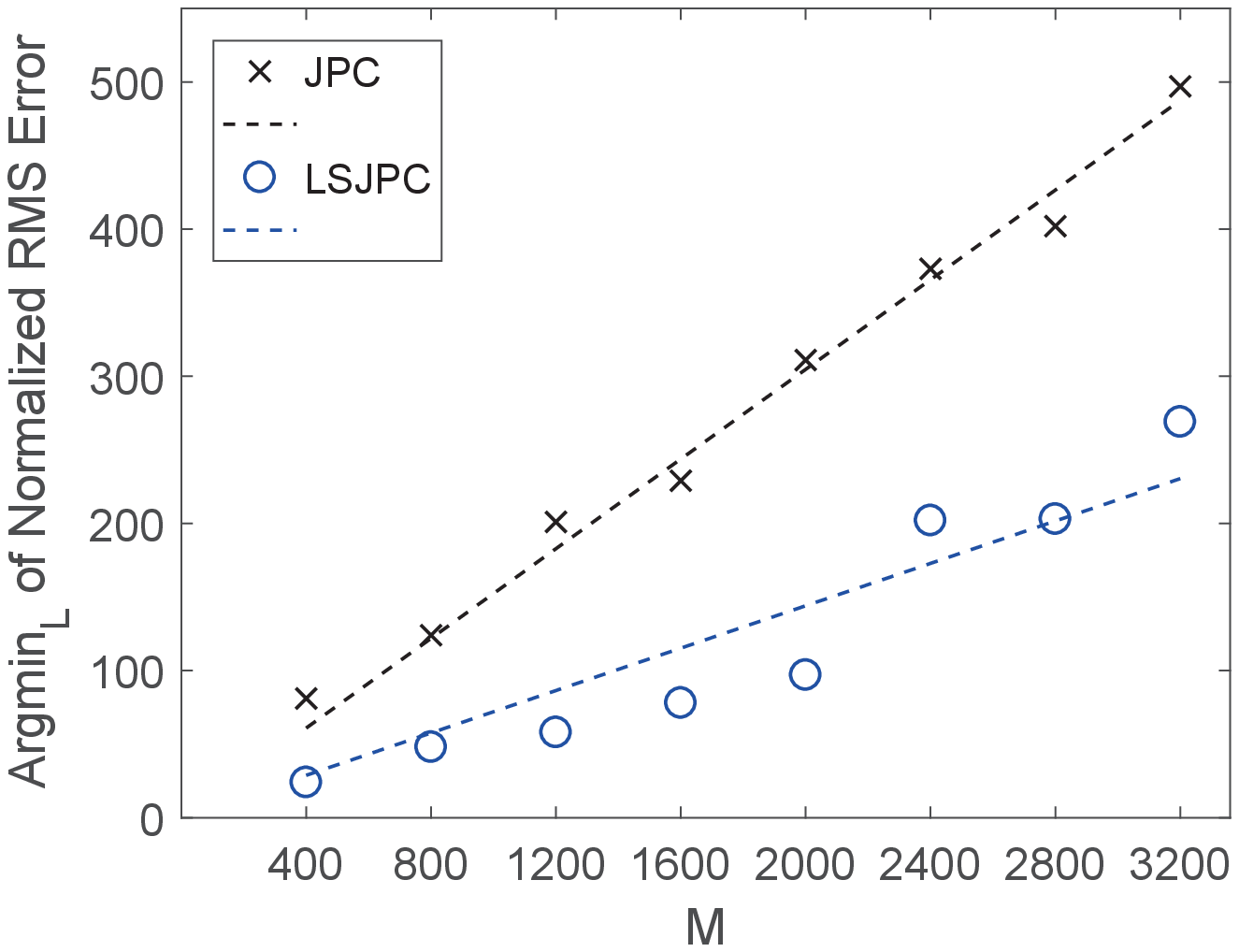}
	\end{center}
	\caption{Best $L$ value as a function of $M$.}
	\label{fig:argminLvsM}
\end{figure}
}

Figure~\ref{fig:2}a 
shows the normalized RMS error as a function of $M$ for JPC, LSJPC, LMMSE, and LRW (indistinguishable from LMMSE). For JPC and LSJPC, we used approximate best $L$ values.
Because $N$ is small in this case and $L\geq N$, $\rank(\ALRW)\leq N$; i.e., the truncation to $L$ in $\ALRW$ has no effect on its performance, which should be close to that of LMMSE except for the impact of computing $\SmYhi$ twice (in LRW) instead of $\SmY^{-1}$ (in LMMSE). We include LMMSE and LRW here to illustrate their performance deterioration as $M$ increases: While JPC and LSJPC have stable performance as $M$ increase, the performance of LMMSE and LRW deteriorates significantly. Moreover, LRW and LMMSE are indistinguishable in performance, reflecting the common cause of the deterioration: the unreliability of computing $\SmY^{-1}$ and $\SmYhi$, respectively.

\cmnt{
\begin{figure}
	\begin{center}
		\includegraphics[width=1.6in]{Figures/RMWSerrorvsMforN7}
	\end{center}
	\caption{Normalized RMS error as a function of $M$.}
	\label{fig:RMSerrorvsM}
\end{figure}
}

For small $M$, the performance of JPC and LSJPC are comparable to LMMSE and LRW. However, while the performance of JPC and LSJPC continues to decrease with $M$ (albeit only slightly), the same is untrue for LMMSE and LRW. Above about $M=1600$, LMMSE and LRW are worse than both JPC and LSJPC, indicating the unreliability of computing the inverses $\SmY^{-1}$ and $\SmYhi$. This behavior can be expected of CSW too, which unlike LRW is suboptimal. At $M=3200$, the normalized RMS errors of LMMSE and LRW are approximately \emph{three times} those of JPC and LSJPC, and are increasing steeply. This illustrates the effectiveness of JPC and LSJPC in addressing the ill-conditioning in LMMSE estimation for large $M$.

Note that we could have simply started with a small $M$ and applied LMMSE or LRW, but we would not have known this beforehand. A distinct advantage of JPC and LSJPC is that they have stable performance without knowing how large we can set $M$ for LMMSE or LRW to perform well without ill-conditioning.

Finally, Figure~\ref{fig:2}b shows the computation times for the four filters. Unsurprisingly, JPC and LSJPC take less time than LMMSE and LRW. We can see the tradeoff mentioned earlier between JPC and LSJPC when considering the performance and computation times.

\section{Conclusion}

We have shown that all optimal filters with a Wiener-closed constraint set have an inherent structure: They are parameterized by their equivalence classes of prefilters. The LRW filter, while constrained in its rank, is not well-conditioned, nor is CSW. We introduced two well-conditioned filters, JPC and LSJPC, and showed that both are asymptotically equivalent to the unconstrained LMMSE filter as the truncation-power loss goes to zero. Our results also apply to minimizing weighted trace and determinant of the error covariance. We tested JPC and LSJPC on real data, with promising results.



\end{document}